\newtheorem{theorem}{Theorem}
\newtheorem{definition}[theorem]{Definition}
\newtheorem{lemma}[theorem]{Lemma}
\begin{document}
\title{A Game of Cops and Robbers on Graphs with Periodic Edge-Connectivity}
\author{Thomas Erlebach \and
Jakob T. Spooner}
\date{\small
School of Informatics, University of Leicester, England\\ \vspace{5pt}
 \{te17, jts21\}@leicester.ac.uk}
\maketitle
\begin{abstract}
This paper considers a game in which a single cop and a single robber take turns moving along the edges of a given graph $G$. If there exists a strategy for the cop which enables it to be positioned at the same vertex as the robber eventually, then $G$ is called cop-win, and robber-win otherwise. We study this classical combinatorial game in a novel context, broadening the class of potential game arenas to include the edge-periodic graphs. These are graphs with an infinite lifetime comprised of discrete time steps such that each edge $e$ is assigned a bit pattern of length~$l_e$, with a 1 in the $i$-th position of the pattern indicating the presence of edge $e$ in the $i$-th step of each consecutive block of $l_e$ steps. Utilising the already-developed framework of reachability games, we extend existing techniques to obtain, amongst other results, an $O(\textsf{LCM}(L)\cdot n^3)$ upper bound on the time required to decide if a given $n$-vertex edge-periodic graph $G^\tau$ is cop or robber win as well as compute a strategy for the winning player (here, $L$ is the set of all edge pattern lengths $l_e$, and $\textsf{LCM}(L)$ denotes the least common multiple of the set $L$).
Separately, turning our attention to edge-periodic cycle graphs, we give proof of a $2\cdot l \cdot \textsf{LCM}(L)$ upper bound on the length required by any edge-periodic cycle to ensure that it is robber win, where $l = 1$ if $\textsf{LCM}(L) \geq 2\cdot \max  L  $, and $l=2$ otherwise. Furthermore, we provide lower bound constructions in the form of cop-win edge-periodic cycles: one with length $1.5 \cdot \textsf{LCM}(L)$ in the $l=1$ case and one with length $3\cdot \textsf{LCM}(L)$ in the $l=2$ case.
\end{abstract}

\section{Introduction}
Pursuit-evasion games are games played between two teams of players, who take turns moving within the confines of some abstract arena. Typically, one team -- the \textit{pursuers} -- are tasked with catching the members of the other team -- the \textit{evaders} -- whose task it is to evade capture indefinitely. The study of such games has led to their application in a number of real-world scenarios, one widely-studied example of which would be their application to the problem of guiding robots through real-world environments \cite{Chung11}. From a theoretical standpoint, other variants of the game have been studied for their intrinsic links to important graph parameters; for example, in one particular variant in which each pursuer can, in a single turn, move to an arbitrary vertex of the given graph~$G$, it is well known that establishing the number of pursuers it takes to catch one evader also establishes the treewidth of $G$ \cite{Seymour93}.

The variant most closely resembled by the one considered in this paper was first studied separately by Quilliot \cite{Quilliot78}, and by Nowakowski and Winkler \cite{Nowakowski83}, as the discrete \textit{Cops and Robbers} game. In essence, the games these authors considered were the same: one cop (pursuer) and one robber (evader) take turns moving across the edges (or remaining at their current vertex) of a given graph $G$, with the cop aiming to catch the robber, and the robber attempting to avoid capture. (By `catching the robber' we mean that the cop is able to occupy the same vertex as the robber within $G$.) In this paper, we consider a variant of the \textit{Cops and Robbers} game with an almost identical set of rules to the one considered in \cite{Quilliot78, Nowakowski83}, but broaden the class of viable game arenas to include the \textit{edge-periodic} graphs \cite{Casteigts11}. As such, we call the game \textit{edge-periodic Cops and Robbers}, or \textsf{EPCR} for short. Informally, such graphs can be thought of as traditional static graphs equipped with an additional function, mapping each edge $e$ to a \textit{pattern} of length $l_e$ that dictates in which time steps $e$ is present within each consecutive period of $l_e$ steps. (Formal definitions of \textsf{EPCR} as well as the class of edge-periodic graphs are given in Section 2.) The class of edge-periodic graphs could also be considered a subclass of so-called \textit{temporal graphs}~\cite{MS18}.

The blanket term `temporal graphs' covers a broad, and relatively new area of interest to mathematicians and computer scientists, which looks to examine the inherent properties of graphs that have had incorporated into their combinatorial structure some element of time-variance. As far as we are aware,  cops and robbers type games (in fact, pursuit-evasion games in general) have not yet been studied in the context of temporal graphs.  

The contribution of this paper is twofold: in Section 3, we consider the problem of deciding, given an edge-periodic graph $G^\tau$, whether a game of edge-periodic cops and robbers played on $G^\tau$ is won by the cop, or won by the robber. We exploit a connection (that was previously noted in \cite{Kehagias142}) between the game of cops and robbers (the one considered in \cite{Quilliot78, Nowakowski83}) and \textit{reachability games}, which are a well studied class of 2-player games with strong connections to formal verification. Utilising reachability games as a framework, we extend existing techniques in order to solve the one cop, one robber variant of cops and robbers on a wider collection of graphs: specifically, what we show is that, given an edge-periodic graph $G^\tau$ whose edges $e$ are each labelled with patterns of length $l_e$, it is possible to decide in $O(\textsf{LCM}(L) \cdot n^3)$ time whether cop or robber wins on $G^\tau$ (where $\textsf{LCM}(L)$ denotes the least common multiple of the set of all edge-pattern lengths in $G^\tau$). We also show that by applying the general form of this result to some subclasses of the class of edge-periodic graphs, we are able to obtain polynomial upper bounds on the amount of time taken to determine the winner. Further, we show that it is possible to construct a strategy for the winning player using an algorithm with the same $O(\textsf{LCM}(L) \cdot n^3)$ time bound. 

In Section 4, we consider a subclass of the edge-periodic graphs, the members of which have underlying cycle graphs. We provide proof of an upper bound of $2 \cdot l \cdot \textsf{LCM}(L)$ on the length required by any such cycle $C^\tau$ in order to guarantee that it is robber-win, where $l = 1$ if $\textsf{LCM}(L) \geq 2\cdot \max  L  $, and $l = 2$ otherwise. Also provided within this section are lower bound constructions showing that there exist cop-win edge-periodic cycles of length $\frac{3}{2} \cdot \textsf{LCM(L)}$ and $3\cdot \textsf{LCM}(L)$ in the $l=1$ and $l=2$ cases, respectively. 
 
\subsection{Related Work}
The introduction of pursuit-evasion type combinatorial games is most often attributed to Torrence D. Parsons, who studied a problem in which a team of rescuers search for a lost spelunker in a circular cave system \cite{Parsons78}. By representing the cave as a cycle graph, he showed that one rescuer is not enough to guarantee that the spelunker is found, but that two are. In a similar vein to the work of Parsons, the \textit{Cop and Robber} problem, in which one cop attempts to catch a robber in a given graph $G$, was introduced independently by both Quilliot \cite{Quilliot78}, and by Nowakowski and Winkler \cite{Nowakowski83}. Their papers characterise precisely those graphs for which one cop is enough to guarantee that the robber is caught. Aigner and Fromme \cite{Aigner84} considered a generalised variant of the game, in which $k$ cops attempt to catch a single robber; their paper introduced the notion of the \textit{cop-number} of a graph, i.e., the minimum number of cops required to guarantee that the robber is caught.

In \cite{Hahn06, Bonato15, Berarducci93}, the authors develop reductions from the standard game of cops and robbers to a \textit{directed game graph}, and specify algorithms that can decide, for a given graph, whether cop or robber wins. In \cite{Kehagias142}, Kehagias and Konstantinidis note a connection between the formulations of \cite{Hahn06, Bonato15, Berarducci93} and reachability games. Reachability games are a well-studied class of 2-player \textit{token-pushing} games, in which two players push a token along the edges of a directed graph in turn -- one with the aim to push the token to some vertex belonging to a prespecified subset of the graph's vertex set, and the other with the aim to ensure the token never reaches such a vertex \cite{Gradel02}. It is well known that the winner of a reachability game played on a given directed graph $G$ can be established in polynomial time \cite{Berwanger12, Gradel02}. For more information regarding cops and robbers/pursuit-evasion games, as well as their connection to reachability games, we refer the reader to \cite{Gradel02, Berwanger12, Kehagias142, Kehagias14, Bonato11, Patsko17, Chung11}.

In this paper, we consider the game of cops and robbers within the context of a \textit{temporal graph} model. Temporal graphs are a relatively new object of interest, and incorporate an aspect of time-variance into the combinatorial structure of traditional static graphs \cite{MS18}. One previously considered way of viewing a temporal graph $\mathcal{G}$ is as a sequence of $L$ subgraphs of a given \textit{underlying graph} $G$ (where $L$ is the \textit{lifetime} of the graph) \cite{Michail16}, with each subgraph indexed by the time steps $t \in [L]$. For problems within this model, it is often natural to assume that each subgraph $G_t$ in all time steps $t \in [L]$ is connected \cite{Michail16}. The edge-periodic graphs considered in this paper differ in that this connectivity assumption is dropped -- similar graphs were introduced in \cite{Casteigts11}. For a detailed account of the theory of temporal graphs thus far, as well as the various models/problems that have been studied, we refer the reader to, for example, \cite{Casteigts18, Michail16, MS18}.

\section{Graph Model and Game Rules}
For any positive integer~$k$ we write $[k]$ for the set $\{0,1,\ldots,k-1\}$.
We begin with our definition of \emph{edge-periodic} graphs:

\begin{definition}[Edge-periodic graph $G^{\tau}$]\label{def1}
	We define an edge-periodic graph $G^\tau = (V, E, \tau)$ to be a temporal graph with
	underlying graph $G=(V,E)$ and
	infinite lifetime, and an additional relation $\tau: E \rightarrow \{0,1\}^*$, which maps each edge $e \in E$ to a \textit{pattern} $\tau(e) = b_e(0),b_e(1),...,b_e(l_e - 1)$ of length $l_e$. Each $\tau(e)$ consists of $l_e$ boolean values, such that $e$ is present in a time step $t$ if and only if $b_e(t \bmod l_e) = 1$; otherwise, $e$ is not present in time step $t$. Additionally, assume that for any edge $e \in E(G^\tau)$, $b_e(i) = 1$ for at least one $i \in [l_e]$; this implies that every edge is present at least once in any period of $l_e$ time steps.
\end{definition}
(We note that any $G^\tau$ whose edges each have patterns consisting of all $1$'s can be treated, under our model, as equivalent to the traditional static graph.) In this paper, we consider a game of cop(s) and robber(s) identical in its rule set to the one introduced in \cite{Quilliot78} and \cite{Nowakowski83} (in particular, the variant with 1 cop and 1 robber), but broaden the class of viable game-arenas to include all possible edge-periodic graphs $G^\tau$.  We call the resulting game \textit{edge-periodic cop(s) and robber(s)}, or \textsf{EPCR} for short -- the following paragraph specifies its rule set, as well as the separate winning conditions for \textsf{C} and \textsf{R}:

\textbf{Rules of \textsf{EPCR}.} Initially, the two players (cop \textsf{C} and robber \textsf{R}) each select a start vertex on a given edge-periodic graph $G^\tau$. \textsf{C} chooses first, followed by \textsf{R}, whose choice is made in full knowledge of \textsf{C}'s choice. After the start vertices have been chosen, in each subsequent time step $t \geq 0$, players take alternating turns moving over an edge in the graph that is incident to their current vertex (or indeed choosing to remain at their current vertex), following the convention that in any particular time step, \textsf{C} moves first, in full knowledge of \textsf{R}'s position, followed by \textsf{R}; again, \textsf{R}'s move is made with full knowledge of the move that \textsf{C} just made. In all subsequent turns, each player makes their moves in full knowledge of the other player's previous move. In line with Definition~\ref{def1}, whenever \textsf{C} or \textsf{R} are situated at a vertex $v \in V(G^\tau)$ during some time step $t$ and it is their turn to make a move, they may only traverse those edges $\{v, u\}$ such that $b_{\{v, u\}}(t \bmod l_{\{v,u\}}) = 1$. The game terminates only when, at the end of either player's move, \textsf{C} and \textsf{R} are situated at the same vertex in $G^\tau$. If there exists a strategy for \textsf{C} which ensures that the game terminates, we say that $G^\tau$ is \textit{cop-win}. Otherwise, there must exist a strategy for \textsf{R} which enables infinite evasion of \textsf{C}; in this case we call $G^\tau$ \textit{robber-win}.

\section{Determining the Winner of a Game of \textsf{EPCR}}
\label{sec:main}
Within this section, our aim is to prove the following theorem (throughout the following, we use $\textsf{LCM}(A)$ to denote the lowest common multiple of a set of integers~$A$):

\begin{theorem}\label{thm2} Let $G^\tau$ be an edge-periodic graph of order $n$, and let $L = \{l_e : e \in E(G^\tau)\}$. Then, it can be decided in $O(\textsf{LCM}(L) \cdot n^3)$ time whether or not $G^{\tau}$ is cop-win.
\end{theorem}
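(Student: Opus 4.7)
The plan is to reduce \textsf{EPCR} to a reachability game on a finite directed game graph $H$ of size $O(\textsf{LCM}(L)\cdot n^3)$ and then invoke the standard attractor algorithm, which computes the winning region in time linear in $|V(H)|+|E(H)|$ \cite{Gradel02, Berwanger12, Kehagias142}. The key observation is that, since edge $e$'s availability at time $t$ depends only on $t \bmod l_e$, the joint availability of all edges depends only on $t \bmod \textsf{LCM}(L)$. Hence the entire relevant game state is captured by a tuple $(v_c, v_r, t, p)$, where $v_c, v_r \in V$ are the players' current vertices, $t \in [\textsf{LCM}(L)]$ is the current residue within one period, and $p \in \{\textsf{C}, \textsf{R}\}$ records whose turn it is.

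I would build $H$ with vertex set $V(H) = V_{\textsf{C}} \cup V_{\textsf{R}}$, where $V_p = V \times V \times [\textsf{LCM}(L)] \times \{p\}$; vertices in $V_{\textsf{C}}$ are controlled by the cop and those in $V_{\textsf{R}}$ by the robber. From $(v_c, v_r, t, \textsf{C})$ I place an outgoing edge to $(v'_c, v_r, t, \textsf{R})$ for each $v'_c$ with either $v'_c = v_c$ or $\{v_c, v'_c\} \in E$ and $b_{\{v_c, v'_c\}}(t \bmod l_{\{v_c, v'_c\}}) = 1$; symmetrically, from $(v_c, v_r, t, \textsf{R})$ I add an edge to $(v_c, v'_r, (t+1) \bmod \textsf{LCM}(L), \textsf{C})$ for each legal robber move $v_r \to v'_r$. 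Note that the time index advances only after \textsf{R} moves, since by the rules both players in time step $t$ face the same edge availabilities. I take the target set $T$ to be all $(v_c, v_r, t, p)$ with $v_c = v_r$ and make these absorbing, which correctly registers capture at the end of either player's move. Running the attractor computation on $(H,T)$ then produces the set $W_{\textsf{C}}$ of states from which the cop has a winning strategy in the reachability game, which by induction on attractor rank coincides with the set of EPCR configurations from which \textsf{C} can force capture. To handle the initial phase, in which \textsf{C} selects $v_c^*$ first and \textsf{R} responds with $v_r^*$, I scan all pairs and declare $G^\tau$ cop-win iff there exists $v_c^* \in V$ such that $(v_c^*, v_r, 0, \textsf{C}) \in W_{\textsf{C}}$ for every $v_r \in V$.

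For the complexity, $|V(H)| = O(\textsf{LCM}(L) \cdot n^2)$, and each vertex has out-degree at most $n$ (at most $n{-}1$ neighbours plus staying put), so $|E(H)| = O(\textsf{LCM}(L) \cdot n^3)$; the attractor computation therefore runs in $O(\textsf{LCM}(L)\cdot n^3)$ time, and the final $O(n^2)$ scan over candidate start vertices is absorbed into this bound. I do not expect the main obstacle to be conceptual: the reduction is a natural temporal unfolding of the standard game-graph construction of \cite{Hahn06, Bonato15, Berarducci93}. The care required lies in bookkeeping — in particular, synchronising the advance of $t$ with \textsf{R}'s move, modelling the two-stage initial choice without doubling the asymptotic size of $H$, and providing a clean correctness argument that a rank-$k$ winning vertex in $H$ corresponds to a strategy capturing the robber within at most a function of $k$ time steps of $G^\tau$.
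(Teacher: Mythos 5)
Your proposal is correct and follows essentially the same route as the paper: the same product state space $(c,r,t,s)$ with $t$ reduced modulo $\textsf{LCM}(L)$ and advanced only on the robber's move, the same reachability-game/attractor computation, the same handling of the initial vertex choice via a final $O(n^2)$ scan, and the same size bounds $O(\textsf{LCM}(L)\cdot n^2)$ states and $O(\textsf{LCM}(L)\cdot n^3)$ edges. No substantive differences to report.
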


The proof relies primarily on a transformation from a given edge-periodic graph $G^{\tau}$ to a finite \emph{directed game graph} $G'$. The transformation is such that the playing of an instance of \textsf{EPCR} on $G^\tau$ is, in some sense, equivalent to the playing of a 2-player token-pushing game (specifically a \textit{reachability game}, which will be properly defined in due course) played on $G'$. To establish this equivalence, we need a way of translating a particular state of an instance of \textsf{EPCR} played on $G^\tau$ to a corresponding state in the reachability game played on $G'$. To this end, the following definition properly introduces the notion of a \textit{position} in a game of \textsf{EPCR}, played on an edge-periodic graph $G^{\tau}$.

\begin{definition}[Position in $G^{\tau}$]\label{def3}
The current state of a game of \textsf{EPCR} played on an edge-periodic graph $G^{\tau}$ is determined by 4 individual pieces of information: (1) the vertex currently occupied by \textsf{C}; (2) the vertex currently occupied by \textsf{R}; (3) the player whose turn it is to move; and (4) the current time step t. To this end, we can define a position in $G^{\tau}$ to be a $4$-tuple, \[P = (c_P, r_P, s_P, t_P),\] where $c_P \in V(G^{\tau})$ is \textsf{C}'s current vertex, $r_P \in V(G^{\tau})$ is \textsf{R}'s current vertex, $s_P \in \{\textsf{C} , \textsf{R}\}$ is the player whose turn it is to move next, and $t_P$ is the current time step. 
\end{definition}
We call any position $P$ such that $c_P = r_P$ a \emph{terminating position}, since this indicates that both players are situated on the same vertex and hence \textsf{C} has won. We now formally introduce the notion of \textit{reachability games} \cite{Gradel02}:

\begin{definition}[Reachability game $G'$]\label{def4} A reachability game is a directed graph $G'$, given as a 3-tuple: \[G' = (V_0 \cup V_1, E', F),\] where $V_0 \cup V_1$ is a partition of the state set $V'$; $E' \subseteq V' \times V'$ is a set of directed edges; and $F \subseteq V'$ is a set of final states. 
\end{definition}
The game is played by two opposing players, \textsf{Player 0} and \textsf{Player 1}; $V_0$ and $V_1$ are the (disjoint) sets of \textsf{Player 0}/\textsf{Player 1} owned nodes, respectively. One can imagine a token being placed at some initial vertex (call it $v_0$) at the start of the game. Depending on whether $v_0 \in V_0$ or $v_0 \in V_1$, we can then imagine the corresponding player selecting one of the outgoing edges of $v_0$, and pushing the token along that edge. When the token arrives at the next vertex, the corresponding player then selects an outgoing edge and pushes the token along it. This process then continues --  such a sequence of moves constitutes a \textit{play} of the reachability game on $G'$. Formally, a play $\phi = v_0,v_1,...$ is a (possibly infinite) sequence of vertices in~$V'$, such that $(v_i, v_{i+1}) \in E'$ for all $i \geq 0$. We say that a play $\phi$ is \textit{won} by \textsf{Player 0} if there exists some $i$ such that $v_i \in F$. Otherwise, $\phi$ is of infinite length and for no $i$ is $v_i \in F$; in this latter case, we say that $\phi$ is won by \textsf{Player 1}. 

\subsection{Transformation}
We now detail our transformation from a given edge-periodic graph $G^\tau$ to a reachability game $\beta(G^\tau)$: let $\beta$ be a transformation function that takes as argument a given edge-periodic graph $G^\tau$, so that the notation $\beta(G^\tau)$ denotes the game graph $G'$ on which each play of a reachability game corresponds to a sequence of moves performed by \textsf{C} and \textsf{R} in a game of \textsf{EPCR} on $G^\tau$. As before, we denote by $\textsf{LCM}(A)$ the least common multiple of the set of integers $A$. Further, let $L = \{l_e : e \in E(G^\tau)\}$. We let $\beta(G^\tau) := G' = (V', E', F)$, and go on to define its individual components below:

\textbf{State set $V'$.} We define the state set (i.e., vertex set) of our directed game graph $\beta(G^\tau)$ to be a set of 4-tuples, each corresponding to a position in the game of \textsf{EPCR} on $G^\tau$ as follows: \[V' = \{(c, r, s, t) : c,r \in V(G^\tau), s \in \{\textsf{C}, \textsf{R}\}, \text{ and } t \in [\textsf{LCM}(L)]\}.\] Keeping in line with Definition~\ref{def4}, we also let $V_0 := \{(c, r, s, t) \in V' : s = \textsf{C}\} \text{ and } V_1 := \{(c, r, s, t) \in V' : s = \textsf{R}\}$ be the sets of \textsf{Player 0} (or \textsf{Cop}) owned nodes,  and \textsf{Player 1} (or \textsf{Robber}) owned nodes, respectively. In order to justify this formulation (in particular, to justify the range of variable $t$ -- the current time step component of any state $S \in V'$), recall from Definition~\ref{def3} that a position $P$ of a game of \textsf{EPCR} consists of a cop vertex $c_P$, a robber vertex $r_P$, a variable $s_P \in \{\textsf{C}, \textsf{R}\}$ which represents the player whose turn it is to move next, and a current time step $t_P$. We wish to capture with our finite directed game graph~$G'$ all possible positions~$P$ of \textsf{EPCR} on $G^\tau$. Since, by Definition~\ref{def1}, the lifetime of $G^\tau$ is infinite, we cannot simply create a state $S \in V(G')$ corresponding to each possible position $P$; the infinite number of time steps would result in an infinite game graph $G'$. It is not hard to see that in the $\textsf{LCM}(L)$-th step, all edge patterns will finish (since $l_e$ divides $\textsf{LCM}(L)$ for all $e \in E(G^\tau)$), and so in the following step, all patterns will restart. As such, we can view the temporal structure of our edge set as an infinitely repeating pattern, and by letting $t$ range over the integers in $[\textsf{LCM}(L)]$ we are able to properly capture this structure using only a finite number of states. 

\textbf{Edge set $E'$.} In order to construct the edge set $E' \subseteq (V_0 \times V_1) \cup (V_1 \times V_0)$, we consider all pairs of states $S = (c, r, s, t)$ and $S' = (c', r', s', t')$ such that $S \neq S'$ and $S, S' \in V'$. We can then see $E'$ as the set of edges such that $(S, S') \in E'$ if and only if the states $S$ and $S'$ satisfy all of the conditions below:

\begin{itemize}
	\item[(1)]{$\big(s = \textsf{C} \land s' = \textsf{R}\big) \lor \big(s=\textsf{R} \land s'=\textsf{C}\big)$,}
	\item[(2)]{$s = \textsf{C} \implies \big(c = c' \lor \{c, c'\} \in E(G^\tau)\big) \land (r = r') \land (t' = t)$,}
	\item[(3)]{$s = \textsf{R} \implies \big(r = r' \lor \{r, r'\} \in E(G^\tau)\big) \land \big(c = c'\big) \newline \land \big(t' \in [\textsf{LCM}(L)] \text{ satisfies } t' = (t+1) \bmod \textsf{LCM}(L)\big)$,}
	\item[(4)]{if $s = \textsf{C}$ and $c\neq c'$, then $b_{\{c, c'\}}(t\bmod l_{\{c, c'\}}) = 1$,}
	\item[(5)]{if $s = \textsf{R}$ and $r\neq r'$, then $b_{\{r, r'\}}(t\bmod l_{\{r, r'\}}) = 1$.}
\end{itemize}

Condition (1) ensures that any sequence of moves constituting a play in $G'$ alternate between \textsf{C} and \textsf{R}, which keeps in line with the rules of \textsf{EPCR}. Condition (2) ensures that any state $S'$, reachable in one move from a \textsf{C}-owned state $S$, is such that $c'$ is adjacent to $c$ in $G^\tau$ (or is in fact~$c$, indicating that \textsf{C} has waited at the current vertex); that the robber's new vertex $r$ does not change; and, that $t' = t$, keeping in line with the rule stating that \textsf{C} moves first in any given time step, followed by \textsf{R} who must also make a move in step $t$. On the other hand, Condition (3) ensures that, once $\textsf{R}$ pushes the token from some $\textsf{R}$-owned state $S$, \textsf{R}'s new vertex $r'$ is adjacent to $r$ in $G^\tau$ (or equal to $r$); that \textsf{C}'s vertex remains the same, and that the state $S'$ to which the token is pushed onto is a state in which the current time step is advanced by one. Conditions (4)-(5) ensure that both players can only make moves across edges that are incident to their current vertex, as well as present in the current time step; on the other hand, they also ensure that either player always has the ability to remain at their current vertex in any step $t$ if they should choose to do so.

\textbf{Set of final states $F$.} Let $F = \{ (c,r,s,t) \in V' : c = r \}$, so that the set of final states consists of all those states that correspond to a position in $G^\tau$ such that \textsf{C} is positioned on the same vertex as \textsf{R}. This adheres to the rules of \textsf{EPCR}, which state that the game terminates only when this condition is met by the current position.

\subsection{Proof of Theorem \ref{thm2}}
We first introduce the elements of the theory of reachability games that are required for the proof of Theorem~\ref{thm2}, starting with the definition of the \textit{attractor set}:
\begin{definition}[Attractor set $Attr(F)$ \cite{Berwanger12}]\label{def5}
	The sequence $({Attr}_i(F))_{i\geq 0}$ is recursively defined as follows:
\begin{eqnarray*}
	{Attr}_0(F) & = & F \\ 
	{Attr}_{i+1}(F) & = & {Attr}_i(F) \cup \{v \in V_0 \mid \exists (v,u) \in E' : u \in {Attr}_i(F)\} \cup  \\
				  & \phantom{   } & \{v \in V_1 \mid \forall(v, u) \in E' : u \in {Attr}_i(F) \}
\end{eqnarray*}
	 We can see that the sets $Attr_i(F)$, as defined above, are a sequence of subsets of $V'$ that are monotone with respect to set-inclusion. We then let \[{Attr}(F) = \bigcup_{i \geq 0} {Attr}_i(F).\] Since $G'$ is finite, we are able to view the set $Attr(F)$ as the least fixed point of the sequence $({Attr}_i(F))_{i\geq 0}$.
\end{definition}

From Definition \ref{def5} it follows by induction that, from those states $S \in {Attr}_i(F) \cap V_0$ such that $i \geq 1$ and $S \notin Attr_j(F)$ for any $j < i$, \textsf{Player 0} is able to force the sequence of play into some state $S_F \in F$ within $i$ moves, by selecting for each such $S$ a successor state $S'$ such that $(S, S') \in E'$ and $S' \in Attr_{i-c}$ for some $c \geq 1$. On the other hand we have that, from any state $S \in Attr_i(F) \cap V_1$ (again, let $i \geq 1$ and $S \notin Attr_j(F)$ for any $j < i$), \textsf{Player 1} can not avoid forcing the sequence of play into a state $S' \in {Attr}_{i-c}$ (for some $c \geq 1$); from the definition of ${Attr}_{i}$ ($i \geq 0$), it again follows by induction that play will be forced into some state $S_F \in F$ in at most $i$ steps. This brings us to the following well-known result from the reachability games literature, which will be useful in proving Theorem~\ref{thm2}:

\begin{theorem}[Berwanger \cite{Berwanger12}]\label{thm6}
	In a given reachability game $G' = (V', E', F)$, $\textsf{Player 0}$ has a winning strategy from any state $S \in Attr(F)$, and $\textsf{Player 1}$ has a winning strategy from any state $S \in (V_0 \cup V_1) - Attr(F)$. 
\end{theorem}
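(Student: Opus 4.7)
The plan is to prove the two assertions separately, using the monotone sequence $(Attr_i(F))_{i \geq 0}$ both as a rank function (for the first part) and as a fixed point (for the second). For \textsf{Player 0}'s direction, I would define the rank $\rho(S)$ of any $S \in Attr(F)$ to be the least $i$ with $S \in Attr_i(F)$, and prove by induction on $\rho(S)$ that \textsf{Player 0} has a strategy forcing play into $F$ within $\rho(S)$ moves. The base case $\rho(S)=0$ is immediate since $Attr_0(F) = F$. For the inductive step: if $S \in V_0$, then the defining clause for $Attr_{\rho(S)}(F)$ supplies an edge $(S, S') \in E'$ with $S' \in Attr_{\rho(S)-1}(F)$, and \textsf{Player 0}'s strategy selects such an $S'$; if $S \in V_1$, then that clause guarantees that \emph{every} outgoing edge lands in $Attr_{\rho(S)-1}(F)$, so whatever \textsf{Player 1} picks, the rank strictly decreases. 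In either case, the inductive hypothesis completes the argument.

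For \textsf{Player 1}'s direction, let $W = (V_0 \cup V_1) \setminus Attr(F)$. The key step is to verify two ``trap'' properties of $W$: every $S \in V_0 \cap W$ has \emph{all} of its outgoing edges ending in $W$, and every $S \in V_1 \cap W$ has \emph{at least one} outgoing edge ending in $W$. Both follow as contrapositives of the defining clauses of the attractor: if $(S, S') \in E'$ with $S \in V_0 \cap W$ and $S' \in Attr_i(F)$, then $S \in Attr_{i+1}(F)$, contradicting $S \notin Attr(F)$; and if $S \in V_1 \cap W$ had all successors in $Attr(F) = \bigcup_i Attr_i(F)$, then---choosing a single stage index $I$ covering all of them, which is possible because $V'$ (and hence the out-degree at $S$) is finite---we would have $S \in Attr_{I+1}(F)$, again a contradiction. \textsf{Player 1}'s strategy is then to pick, at every $V_1 \cap W$-state, some successor in $W$; by the first trap property, even \textsf{Player 0}'s moves cannot leave $W$. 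Since $W \cap F = \emptyset$, the resulting infinite play avoids $F$, so \textsf{Player 1} wins.

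The main obstacle is the fixed-point step underpinning the second trap property: passing from ``all successors lie in $\bigcup_i Attr_i(F)$'' to ``all successors lie in a single $Attr_I(F)$'' relies essentially on finite out-degree at each state, which is why the theorem needs $V'$ to be finite (as is guaranteed for $\beta(G^\tau)$ by construction). Once the two trap properties are in hand, the inductive argument on $\rho$ and the invariance of $W$ combine to yield exactly the claimed partition of $V_0 \cup V_1$ into winning regions for \textsf{Player 0} and \textsf{Player 1}.
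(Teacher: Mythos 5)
Your proof is correct, and it is worth noting that the paper itself does not prove this statement: Theorem~\ref{thm6} is imported from the reachability-games literature (Berwanger), and the only argument the paper offers is the informal paragraph preceding it, which sketches exactly your first half --- the induction on the least index $i$ with $S \in Attr_i(F)$, with \textsf{Player 0} decreasing the rank at $V_0$-states and the rank decreasing unavoidably at $V_1$-states. Your second half, the ``trap'' argument showing that the complement $W = V' \setminus Attr(F)$ is closed under all \textsf{Player 0} moves and admits a \textsf{Player 1} move staying inside $W$, is the part the paper omits entirely, and you handle the one genuinely delicate point correctly: passing from ``every successor lies in some $Attr_i(F)$'' to ``every successor lies in a single $Attr_I(F)$'' needs finite out-degree, which holds here because $V'$ is finite. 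The only loose end --- shared with the paper's own definition of a play --- is the treatment of states with no outgoing edges, where a maximal play is finite and the stated winning condition (``\textsf{Player 1} wins iff the play is infinite and avoids $F$'') does not literally apply; under the usual convention that a player unable to move loses, your argument goes through unchanged, and in the graphs $\beta(G^\tau)$ actually used the issue never arises since both players may always wait.
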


Recall now that the transformation $\beta$ produces, from a given edge-periodic graph $G^\tau$, a directed game graph $\beta(G^\tau) = (V', E', F)$ such that there is a correspondence between every possible position in the game of \textsf{EPCR} on $G^\tau$ with some state in $V'$, and vice versa. Using the notation $S_P$ to refer to the state in $V'$ that corresponds to the position $P$ in the game of \textsf{EPCR} on $G^\tau$, we can compute the set $Attr(F)$ for our game graph $\beta(G^\tau)$ and thus, on invocation of Theorem \ref{thm6}, state the following lemma:

\begin{lemma}\label{lem7}
\textsf{Cop} can force a win from a position $P$ if and only if the state $S_P \in V(\beta(G^{\tau}))$ satisfies $S_P \in Attr(F)$.
\end{lemma}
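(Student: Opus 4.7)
The plan is to establish the biconditional by exhibiting a move-by-move correspondence between legal plays of \textsf{EPCR} on $G^\tau$ starting from position $P$ and plays of the reachability game on $\beta(G^\tau)$ starting from state $S_P$, and then appealing directly to Theorem~\ref{thm6}. Since Theorem~\ref{thm6} already converts an attractor membership statement into a winning-strategy statement, the work of the lemma is to verify that the transformation $\beta$ faithfully preserves game dynamics.

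First, I would argue that any (possibly infinite) legal sequence of positions $P_0, P_1, P_2, \ldots$ in \textsf{EPCR} with $P_0 = P$ corresponds canonically to the sequence of states $S_{P_0}, S_{P_1}, S_{P_2}, \ldots$ in $\beta(G^\tau)$, and conversely. The essential claim is that a one-step transition $P_i \to P_{i+1}$ is legal in \textsf{EPCR} if and only if $(S_{P_i}, S_{P_{i+1}}) \in E'$. This reduces to inspecting Conditions~(1)--(5) of the edge-set construction: (1) captures the strict alternation of Cop and Robber turns; (2) and (3) encode that the active player either waits or traverses an incident edge while the opponent's vertex remains fixed, together with the convention that the time-step component is preserved on a Cop move and advanced by $1$ modulo $\textsf{LCM}(L)$ after a Robber move; and (4)--(5) enforce the edge-availability requirement $b_e(t \bmod l_e) = 1$ from Definition~\ref{def1}. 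The modular reduction is justified because every $l_e$ divides $\textsf{LCM}(L)$, so $t \bmod \textsf{LCM}(L)$ determines $t \bmod l_e$ for every edge $e$, and no information about edge presence is lost.

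Second, the set of final states was defined as $F = \{(c,r,s,t) \in V' : c = r\}$, which matches exactly the terminating-position condition $c_P = r_P$ from Definition~\ref{def3}. Combined with the step-by-step correspondence, this implies that an infinite \textsf{EPCR} play in which Robber evades capture corresponds to an infinite play in $\beta(G^\tau)$ that never enters $F$, and a capturing \textsf{EPCR} play corresponds to a reachability-game play that reaches $F$. Moreover, the partition $V_0 \cup V_1$ was set up so that $S_P \in V_0$ iff $s_P = \textsf{C}$, so the correspondence is player-preserving: any Cop strategy in \textsf{EPCR} translates into a Player~0 strategy in $\beta(G^\tau)$ by selecting, from each Cop-owned state, the out-neighbour encoding the prescribed Cop move, and symmetrically for Robber and Player~1. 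Hence Cop has a winning strategy from $P$ in \textsf{EPCR} if and only if Player~0 has a winning strategy from $S_P$ in the reachability game; by Theorem~\ref{thm6}, this is equivalent to $S_P \in Attr(F)$.

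The main obstacle, such as it is, lies in making the move-by-move correspondence airtight rather than in any deep combinatorics. In particular, one must check (i) that the time-step bookkeeping is correct, namely that $t$ advances only after Robber's move and that the reduction modulo $\textsf{LCM}(L)$ faithfully represents every $b_e$; (ii) that waiting in place is always permitted in $\beta(G^\tau)$ regardless of local edge availability, which follows because Conditions~(4)--(5) are vacuous when $c = c'$ or $r = r'$; and (iii) that Robber is never forced to move when no incident edge is present, which is again secured by the option to wait. Each of these is a direct consequence of the construction, but a careful case analysis is warranted so that no implicit assumption about edge presence slips into the correspondence.
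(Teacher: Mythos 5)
Your proposal is correct and follows essentially the same route as the paper, which states Lemma~\ref{lem7} as an immediate consequence of the position-to-state correspondence together with Theorem~\ref{thm6} and does not write out the move-by-move verification at all. Your more detailed check of Conditions~(1)--(5), the modular time-step bookkeeping, and the matching of $F$ with terminating positions simply makes explicit what the paper leaves implicit.
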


Note that one consequence of Lemma \ref{lem7} is the following: In a game of \textsf{EPCR} on $G^\tau$ starting from a position $P$ such that $S_P \notin Attr(F)$, the robber can force the sequence of moves to never reach any state $S \in F$, and, as such, the game can be won by \textsf{R}.  

\begin{lemma}\label{lem8}
 An edge-periodic graph $G^{\tau}$ is cop-win if and only if there exists a vertex $v \in V(G^{\tau})$ such that $(v, r, \textsf{C}, 0) \in {Attr}(F)$ for all $r \in V(G^{\tau})$.
\end{lemma}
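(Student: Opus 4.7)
My plan is to unfold the informal notion of ``cop-win'' into the reachability-game language via Lemma~\ref{lem7}, being careful to align the selection phase of \textsf{EPCR} with the state encoding used by $\beta(G^\tau)$. By the rules, \textsf{C} picks a start vertex $v$ first, then \textsf{R} picks $r$ with full knowledge of $v$, and play then begins with \textsf{C} to move at time step $t = 0$. Thus the state immediately after both selections is exactly the position $P = (v, r, \textsf{C}, 0)$ of Definition~\ref{def3}, and ``$G^\tau$ is cop-win'' unfolds into the statement ``there exists $v \in V(G^\tau)$ such that, for every $r \in V(G^\tau)$, \textsf{C} has a winning strategy from $(v, r, \textsf{C}, 0)$.'' Lemma~\ref{lem7} then translates each inner clause into the attractor condition $(v, r, \textsf{C}, 0) \in Attr(F)$, and the result follows by pure quantifier-matching.

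For the $(\Leftarrow)$ direction I would argue directly: given a $v$ as in the statement, the cop's strategy is to commit to $v$ as its starting vertex, wait for \textsf{R} to choose some $r$, and then play the winning strategy guaranteed by Lemma~\ref{lem7} from the position $(v, r, \textsf{C}, 0)$. Since, by hypothesis, $(v,r,\textsf{C},0) \in Attr(F)$ for every $r$, this strategy is always well-defined and forces a terminating position in finitely many moves. The corner case $r = v$ is handled automatically, as then $(v,v,\textsf{C},0) \in F \subseteq Attr(F)$ and \textsf{C} has already won.

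For the $(\Rightarrow)$ direction I would take any winning strategy for \textsf{C}, extract from it the vertex $v \in V(G^\tau)$ that the strategy prescribes as \textsf{C}'s opening choice (such a $v$ exists because \textsf{C} must commit before seeing \textsf{R}'s choice), and observe that for every $r$ the remainder of the strategy constitutes a winning continuation from position $(v, r, \textsf{C}, 0)$; Lemma~\ref{lem7} then gives $(v, r, \textsf{C}, 0) \in Attr(F)$ for all $r$, as required. I do not expect a serious obstacle in either direction, since the work has already been done in Lemma~\ref{lem7}; the only point that calls for explicit verification is the alignment between ``\textsf{C}'s turn to move at time step $0$'' and the initial state of \textsf{EPCR} immediately after the selection phase, together with the observation that the robber's freedom to pick any $r$ in response to $v$ corresponds precisely to the universal quantifier over $r$ in the lemma's statement.
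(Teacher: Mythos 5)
Your proposal is correct and follows essentially the same route as the paper: both directions reduce to Lemma~\ref{lem7} together with the observation that \textsf{C}'s first-move commitment and \textsf{R}'s informed response correspond exactly to the $\exists v\,\forall r$ quantifier pattern over states $(v,r,\textsf{C},0)$. The only cosmetic difference is that you argue the forward direction directly by extracting $v$ from a winning strategy, whereas the paper argues it by contradiction; the content is the same.
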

\begin{proof}
($\Rightarrow$) Assume not, so that $G^\tau$ is cop-win but there exists no vertex $v \in V(G^\tau)$ such that $(v, r, \textsf{C}, 0) \in {Attr}(F)$ for all $r \in V(G^\tau)$. Then for every $v$, there exists at least one vertex $u$ such that the state $(v, u, \textsf{C}, 0) \notin Attr(F)$. Let \textsf{C} choose its start vertex $c$ as $v$ (i.e., sets $c = v$), and let \textsf{R} set $r = u$. Since \textsf{R} chooses $r$ in full knowledge of \textsf{C}'s choice of $c$, it follows that \textsf{R} can force the equivalent reachability game on $\beta(G^\tau)$ to begin from a state $S_{(c,u,\textsf{C},0)} \notin Attr(F)$, hence winning the reachability game regardless of \textsf{C}'s choice of $c$. Notice that this implies that there exists a winning strategy for $\textsf{R}$ in the game of $\textsf{EPCR}$ on $G^\tau$; this is a contradiction since, by assumption, $G^\tau$ is cop-win.

($\Leftarrow$) Assume \textsf{C} chooses $v$ as its start vertex, i.e., sets $c = v$. By doing so, the equivalent reachability game on $\beta(G^\tau)$ can be forced to start at some state $(v, r, \textsf{C}, 0) \in Attr(F)$ regardless of \textsf{R}'s choice of $r$, since $(v, r, \textsf{C}, 0) \in Attr(F)$ for all $r \in V(G^{\tau})$. Hence, regardless of \textsf{R}'s choice of $r$, \textsf{C} wins the reachability game on $\beta(G^\tau)$, and, as a result, can win the game of $\textsf{EPCR}$ on $G^\tau$ by picking its start vertex as $v$; the lemma follows.
\end{proof}

The proof of the main theorem will also make use of a further known result from the reachability games literature; for the following, let $G' = (V', E', F)$ be a given directed game graph.

\begin{theorem}[Gr\"adel et al.\cite{Gradel02}]\label{thm9}
There exists an algorithm which computes the set $Attr(F)$ in time $O(|V'| + |E'|)$. 
\end{theorem}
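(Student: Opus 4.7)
My plan is to present the classical backward-propagation algorithm for computing $Attr(F)$ in a given directed game graph $G' = (V', E', F)$ and to argue that it runs in $O(|V'| + |E'|)$ time. The approach exploits the recursive definition of $Attr_i(F)$ in Definition \ref{def5}: a $V_0$-vertex joins the attractor as soon as \emph{any} successor does, whereas a $V_1$-vertex must wait until \emph{every} successor has joined. Propagating this information backwards from $F$ suffices, provided we avoid recomputing the universal condition from scratch at every step.

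First, I would preprocess $G'$ in $O(|V'| + |E'|)$ time to build, for each vertex $v$, an explicit list of its in-neighbours, and to compute the out-degree $d^+(v)$ for every $v \in V_1$. I then initialise a set $A := F$ (the running approximation of $Attr(F)$), a FIFO queue $Q$ containing the vertices of $F$, and a counter $c(v) := d^+(v)$ for each $v \in V_1$. The main loop repeatedly dequeues a vertex $v$ from $Q$ and inspects each in-neighbour $u \notin A$: if $u \in V_0$ then $u$ has the witness-successor $v$ already in $A$, so I add $u$ to $A$ and enqueue it; if $u \in V_1$ then I decrement $c(u)$, and add $u$ to $A$ and enqueue it precisely when $c(u)$ drops to $0$. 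The loop terminates when $Q$ is empty, at which point I return $A$.

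For correctness, I would prove by induction on the order in which vertices are enqueued that $A \subseteq Attr(F)$ is maintained throughout (each newly added vertex meets either the existential or the universal condition with respect to the previously added set, so it belongs to some $Attr_i(F)$ by monotonicity). Conversely, I would show by induction on $i$ that every vertex in $Attr_i(F)$ eventually enters $A$: the base case $Attr_0(F) = F$ holds by initialisation, and for the inductive step, each witness-successor (for a $V_0$ vertex) or all successors (for a $V_1$ vertex) will have been added by hypothesis, causing the vertex to be detected during the subsequent dequeue events. Termination is immediate since $A$ only grows and $|V'|$ is finite.

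For the running time, the key amortised observation is that each vertex enters $Q$ at most once, because vertices are enqueued only upon being added to $A$, which is monotone. Consequently, across the entire execution, the total cost of traversing in-neighbour lists telescopes to $\sum_v |N^-(v)| = |E'|$, and the constant-time work per edge (membership test, counter decrement, enqueueing) adds at most a constant factor. Combined with the $O(|V'| + |E'|)$ preprocessing, this yields the desired $O(|V'| + |E'|)$ bound. The main subtlety I anticipate is the counter argument for $V_1$-vertices: I need each in-edge to trigger at most one decrement, which is guaranteed by the fact that we only process predecessors of $v$ the first time $v$ is dequeued, and $v$ is dequeued exactly once.
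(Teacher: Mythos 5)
The paper states this as a cited result from Gr\"adel et al.\ and gives no proof of its own; your backward-propagation algorithm with in-neighbour lists and out-degree counters for $V_1$-vertices is precisely the standard linear-time attractor computation from that literature, and your correctness induction and amortised running-time argument (each vertex dequeued at most once, each edge examined at most once) are sound. The only point worth flagging is the degenerate case of a $V_1$-vertex with no outgoing edges, which Definition~\ref{def5} places in $Attr_1(F)$ vacuously but whose counter starts at $0$ and is never decremented, so your loop would miss it; this is harmless for the game graphs $\beta(G^\tau)$ of this paper, where every state has out-degree at least~$1$ because a player may always remain at its current vertex.
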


Given the above, all is in place for the proof of Theorem \ref{thm2}:

\begin{proof}[Proof of Theorem \ref{thm2}]
	Since $n = |V(G^\tau)|$, $\beta$~produces, given an edge-periodic graph $G^\tau$, a directed game graph $\beta(G^\tau) = (V', E', F)$, such that $|V'| = O(\textsf{LCM}(L)\cdot n^2)$. To see this, observe first that for an arbitrary position $P = (c_P, r_P, s_P, t_P)$ in a game of \textsf{EPCR} on $G^\tau$, there are $n$ ways to choose $c_P \in V(G^\tau)$, $n$ ways to choose $r_P \in V(G^\tau)$, and a further 2 ways to choose $s_P \in \{\textsf{C}, \textsf{R}\}$. By definition of the transformation function, $G'$ has states for time steps $t \in [\textsf{LCM}(L)]$ only, and so in total we have that $|V'| = 2\cdot \textsf{LCM}(L)\cdot n^2 = O(\textsf{LCM}(L)\cdot n^2)$, as claimed. Next, note that each state $S_P \in V'$ has at most $n$ edges leading away from it to other states. This is because in the corresponding position $P$ in the game of $\textsf{EPCR}$ on $G^\tau$, the player whose turn it currently is has at most $n$ choices of moves across edges -- at most $n-1$ edges leading to other vertices plus the choice of remaining at the current node. Since there are $O(\textsf{LCM}(L)\cdot n^2)$ states $S \in V'$, it follows that $|E'| = O(\textsf{LCM}(L)\cdot n^3)$.
	
Combining the above with the result of Theorem \ref{thm9}, we can conclude that the attractor set $Attr(F)$ (that is, the set of all states from which \textsf{Player 0}, i.e., \textsf{C}, has a winning strategy) of any graph $\beta(G^\tau)$ can be computed in time $O(\textsf{LCM}(L)\cdot n^3)$. By Lemma \ref{lem8}, we can then verify whether or not $G^\tau$ is cop-win by checking if there exists at least one vertex $v \in V(G^\tau)$ such that $(v, u, \textsf{C}, 0) \in Attr(F)$, for all $u \in V(G^\tau)$; if such a $v$ exists, the algorithm will return $\textsf{YES}$, otherwise the algorithm will return $\textsf{NO}$. Carrying out this check can clearly take at most $O(n^2)$ time, and the theorem follows.
\end{proof}

We also note, as a direct consequence of Theorem~\ref{thm2}, that as long as $\textsf{LCM}(L)$ is polynomial in $n$ and $\max{L}$, then the winner of a given graph $G^\tau$ can be decided in polynomial time. Furthermore, if the labels $l_e$ are bounded by some constant for all $e \in E(G^\tau)$, then the winner can be decided in $O(n^3)$ time.

As well as being able to decide whether or not a given edge-periodic graph is cop-win or not, we would like to be able to compute a strategy for the winning player of the game of \textsf{EPCR} on a given graph $G^\tau$. One common way to view a strategy for \textsf{Player i} ($i \in \{0, 1\}$), in a general infinite game played on a game graph $G = (V, E, F)$ (where $V := V_0 \cup V_1$), is as a partial function $\sigma : V^*V_i \rightarrow V$. Here, $V^*V_i$ can be seen as the set of all prefixes (of any play $\phi$ in $G$) that end in a state $S \in V_i$, with $\sigma$ dictating to \textsf{Player i} the appropriate move to play, based on the history of these prefixes. 

On the other hand, a \textit{memoryless} strategy can be viewed more simply -- as a partial function $\sigma : V_i \rightarrow V'$. Such a strategy $\sigma$ can be employed in games where a correct move for a player depends not on the entire state-history of some play (or a prefix of) $\phi$, but only on the current state. It is well-known that reachability games fall into this category \cite{Gradel02}; since \textsf{EPCR} reduces to a reachability game, we are thus able to make use of the following result from the literature:

\begin{theorem}[Berwanger \cite{Berwanger12}]\label{thm12}
	Given a reachability game $G' = (V', E', F)$, one can compute in $O(|V'| + |E'|)$ time a memoryless winning-strategy for $\textsf{Player 0}$ from any state $S \in Attr(F)$, and a memoryless winning-strategy for $\textsf{Player 1}$ from any state $S \in (V_0 \cup V_1) - Attr(F)$. 
\end{theorem}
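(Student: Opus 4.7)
The plan is to extend the linear-time attractor computation from Theorem~\ref{thm9} so that, while the set $Attr(F)$ is being built, we simultaneously record the moves that will constitute the memoryless winning strategies for each player. First I would recall the classical algorithm for computing $Attr(F)$: one processes states in a backward breadth-first fashion from $F$, maintaining for each state $S \in V_1$ a counter that tracks how many of $S$'s out-neighbours are not yet known to lie in the attractor; a \textsf{Player 1} state is pushed onto the frontier as soon as this counter drops to zero, and a \textsf{Player 0} state is pushed as soon as any one of its successors enters the attractor. Using a reverse adjacency list, this processes each edge a constant number of times, giving the $O(|V'|+|E'|)$ bound.

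Next, to extract \textsf{Player 0}'s strategy, I would augment the algorithm so that whenever a state $S \in V_0 \cap Attr(F)$ is first reached via an edge $(S,S')$ with $S' \in Attr_i(F)$, the edge $(S,S')$ is stored as $\sigma_0(S)$. By Definition~\ref{def5}, following $\sigma_0$ strictly decreases the attractor rank of the current state, so play is forced into $F$ within at most $|V'|$ moves. This bookkeeping costs only $O(1)$ per state added and therefore preserves the linear bound.

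For \textsf{Player 1}'s strategy on $(V_0 \cup V_1) \setminus Attr(F)$, the key structural observation (which follows directly from the fixed-point definition of the attractor) is that every $S \in V_1 \setminus Attr(F)$ possesses at least one out-neighbour $S'$ with $S' \notin Attr(F)$, since otherwise the $V_1$-clause of Definition~\ref{def5} would have forced $S$ into the attractor; and every $S \in V_0 \setminus Attr(F)$ has \emph{all} of its out-neighbours outside $Attr(F)$, since a single attractor-successor would have forced $S$ into the attractor via the $V_0$-clause. Hence one further linear-time sweep through the adjacency lists suffices to pick, for each $S \in V_1 \setminus Attr(F)$, some successor $\sigma_1(S) \notin Attr(F)$.

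The main obstacle, as I see it, is not the algorithmic bookkeeping (which is routine) but rather the correctness argument for $\sigma_1$: one has to verify that obeying $\sigma_1$ keeps every resulting play inside the complement of $Attr(F)$ forever. This I would establish by induction on the number of moves, combining the two observations above into the invariant \emph{``the current state is never in $Attr(F)$''}; since $F \subseteq Attr(F)$, the play never reaches a final state, so \textsf{Player 1} wins. Combining this with the correctness of $\sigma_0$ established above, and with the linear-time construction of both strategies, yields the claimed $O(|V'|+|E'|)$ bound and proves the theorem.
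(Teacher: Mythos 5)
The paper contains no proof of this statement: Theorem~\ref{thm12} is imported as a black box from Berwanger~\cite{Berwanger12}, and the only related material in the text is the informal paragraph following Definition~\ref{def5}, which sketches the inductive argument for why \textsf{Player 0} can force play from $Attr_i(F)$ into $F$ within $i$ moves. Your reconstruction is the standard argument from the reachability-games literature and is essentially correct: the counter-based backward search gives the $O(|V'|+|E'|)$ attractor computation, recording the witnessing edge when a $V_0$-state first enters the attractor gives $\sigma_0$, and the two complementation facts (a $V_1$-state outside $Attr(F)$ has some successor outside it, a $V_0$-state outside $Attr(F)$ has \emph{all} successors outside it) give $\sigma_1$ together with the trapping invariant; since $F \subseteq Attr(F)$, that invariant does imply \textsf{Player 1} wins. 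Two small points are worth tightening. First, the correctness of $\sigma_0$ requires not only that \textsf{Player 0}'s prescribed moves decrease the attractor rank, but also that \textsf{Player 1}, at a $V_1$-state of minimal rank $i \geq 1$, has no move except to a state of rank at most $i-1$ --- this is exactly the $V_1$-clause of Definition~\ref{def5}, and it is the half of the induction the paper does spell out; your phrase ``following $\sigma_0$ strictly decreases the attractor rank'' covers only \textsf{Player 0}'s half of the moves. Second, a self-contained proof must say something about dead-end states (a player with no outgoing edge); this is harmless here because every state of $\beta(G^\tau)$ has an outgoing edge (a player may always wait), but the general theorem needs either a no-dead-end assumption or the convention that a stuck player loses. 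Neither point undermines your argument.
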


As such, given a directed game graph $\beta(G^\tau) = (V', E', F)$ (with $V' := V_0 \cup V_1$), Theorem \ref{thm12} tells us that it suffices to compute, for the winning player, a memoryless winning strategy $\sigma_i : V_i \rightarrow V'$, with the value of $i \in \{0, 1\}$ depending on the winner of the reachability game on $\beta(G^\tau)$. The following theorem shows that it is possible to interpret any such $\sigma$ as a strategy for the winning player in the corresponding game of \textsf{EPCR} on $G^\tau$:

\begin{theorem}\label{thm13}
	Let $G^\tau$ be an arbitrary edge-periodic graph and $L = \{l_e : e \in E(G^\tau)\}$. Then, depending on whether $G^\tau$ is cop-win or not, one can compute in $O(\textsf{LCM}(L)\cdot n^3)$ time either a memoryless winning strategy enabling \textsf{C} to capture \textsf{R}, or a memoryless winning strategy enabling \text{R} to evade capture infinitely.
\end{theorem}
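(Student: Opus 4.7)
The plan is to reduce the problem to the reachability game $\beta(G^\tau)$ that was already constructed in Section~3, invoke Theorem~\ref{thm12} to obtain a memoryless winning strategy on that game, and then translate it back across the position-to-state correspondence $P \mapsto S_P$ into a strategy for \textsf{EPCR}. First I would build $\beta(G^\tau) = (V', E', F)$ explicitly, which, by the argument in the proof of Theorem~\ref{thm2}, can be done in $O(\textsf{LCM}(L)\cdot n^3)$ time because $|V'| = O(\textsf{LCM}(L)\cdot n^2)$ and $|E'| = O(\textsf{LCM}(L)\cdot n^3)$. Next I would compute $Attr(F)$ by the algorithm of Theorem~\ref{thm9} (also in $O(|V'| + |E'|)$ time), and apply Lemma~\ref{lem8} to decide which of \textsf{C} or \textsf{R} is the winner by scanning for a vertex $v$ such that $(v,r,\textsf{C},0) \in Attr(F)$ for all $r \in V(G^\tau)$, which costs an extra $O(n^2)$.

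Then I would invoke Theorem~\ref{thm12} on $\beta(G^\tau)$ to produce a memoryless winning function $\sigma$ on the reachability game, in time $O(|V'| + |E'|) = O(\textsf{LCM}(L)\cdot n^3)$. If $G^\tau$ is cop-win, $\sigma$ is defined on every Cop-owned state in $Attr(F)$ and always selects a successor of strictly smaller attractor index; if $G^\tau$ is robber-win, $\sigma$ is defined on every Robber-owned state in $V_1 \setminus Attr(F)$ and always points to a successor still outside $Attr(F)$. Both cases fit within the claimed time bound.

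It then remains to describe how $\sigma$ is interpreted as a strategy in the actual game. I would handle this in two parts. For the initial vertex selection, in the cop-win case \textsf{C}'s strategy picks any $v$ of the form guaranteed by Lemma~\ref{lem8}; in the robber-win case, after \textsf{C} reveals its start vertex $v$, \textsf{R} picks any $u$ with $(v,u,\textsf{C},0) \notin Attr(F)$, which exists precisely because $v$ does not witness cop-win. For the subsequent play, given the current \textsf{EPCR} position $P = (c_P, r_P, s_P, t_P)$ where $s_P$ is the winning player, one translates it to $S_P = (c_P, r_P, s_P, t_P \bmod \textsf{LCM}(L)) \in V'$, consults $\sigma(S_P)$, and plays the unique move (vertex change or stay-put, for either \textsf{C} or \textsf{R} depending on $s_P$) that this successor encodes.

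The main point to be careful about, and where I would spend the most attention, is verifying that this translation really does yield a memoryless \textsf{EPCR} strategy that wins. Memorylessness holds because $P \mapsto S_P$ depends only on the four current fields of $P$, and, by Definition~\ref{def1}, edge availability depends on $t_P$ only modulo $l_e$, hence only modulo $\textsf{LCM}(L)$. Correctness follows because the construction of $\beta(G^\tau)$ puts every sequence of legal \textsf{EPCR} moves starting at $P_0$ into bijection with plays of the reachability game starting at $S_{P_0}$; so any winning play under $\sigma$ translates either to capture of \textsf{R} by \textsf{C} (cop-win case) or to infinite evasion by \textsf{R} (robber-win case). Summing the running times of the construction, attractor computation, Lemma~\ref{lem8} check, and strategy extraction yields the claimed $O(\textsf{LCM}(L)\cdot n^3)$ bound.
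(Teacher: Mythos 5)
Your proposal is correct and follows essentially the same route as the paper's own proof: decide the winner via the attractor-set computation and Lemma~\ref{lem8}, extract a memoryless strategy on $\beta(G^\tau)$ via Theorem~\ref{thm12}, handle the initial vertex choice separately for the cop-win and robber-win cases, and translate back through the position-to-state correspondence. The only (immaterial) differences are that the paper precomputes the robber's start-vertex response as a map $\sigma^0_\textsf{R}$ over all possible cop choices rather than selecting it on the fly, and that you spell out the memorylessness/correctness of the translation slightly more explicitly.
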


\begin{proof}
Let $b \in \{ \textsf{YES}, \textsf{NO} \}$ be the return value of the algorithm from Theorem \ref{thm2} when provided $G^\tau$ as input. First, we construct a strategy for the winning player of the equivalent reachability game $\beta(G^\tau) = (V', E', F)$, and go on to show how such a strategy can then be interpreted as a strategy for the corresponding game of \textsf{EPCR} on $G^\tau$.

First, consider the case in which $b = \textsf{YES}$. Then we know $G^\tau$ is cop-win and, by Lemma \ref{lem8}, we know that there exists some vertex $v$ such that $(v, u, \textsf{C}, 0) \in Attr(F)$ for all $u \in V(G^\tau)$. As such, the initial stage of our strategy for \textsf{C} should consist of computing such a vertex $v$, and setting the cop start vertex in $G^\tau$ to $c = v$. Now, using Theorem \ref{thm12}, we can compute a memoryless winning-strategy $\sigma_\textsf{C}$. The initial stage of identifying some vertex $v$ such that $(v, u, \textsf{C}, 0) \in Attr(F)$ for all $u \in V(G^\tau)$ takes $O(n^2)$ time, and the algorithm of Theorem \ref{thm12} takes time at most $O(|V'| + |E'|) = O(\textsf{LCM}(L)\cdot n^3)$; it follows that the overall construction of a \textsf{C}-strategy for the reachability game $\beta(G^\tau)$ takes $O(\textsf{LCM}(L)\cdot n^3)$ time. Such a strategy for $\beta(G^\tau)$ can then be interpreted as strategy for \textsf{C} in the game of \textsf{EPCR} on $G^\tau$ by first selecting start vertex $v$. From then onward, whenever it is $\textsf{C}'s$ turn, in order to establish the appropriate move to play given a current position $P = (c_P, r_P, \textsf{C}, t_P)$, \textsf{C} constructs from it a state $S_P$, and checks the $c'$ component of the state $\sigma_\textsf{C}(S_P) = (c', r', \textsf{C}, t_P)$. Finally, it is guaranteed that $c'$ is adjacent to $c_P$ (or possibly $c = c_P$) during $t_P$, due to the way the transformation from $G^\tau$ to $\beta(G^\tau)$ has been defined.

In the situation in which $b = \textsf{NO}$, then we know that $G^\tau$ is robber-win, and thus know by Lemma \ref{lem8} that for every $v \in V(G^\tau)$, there exists at least one vertex $u$ such that the state $(v, u, \textsf{C}, 0) \notin Attr(F)$. Thus, the initial stage of our strategy for \textsf{R} involves the construction of a mapping $\sigma^0_\textsf{R} : V(G^\tau) \rightarrow V(G^\tau)$ from each possible $v \in V(G^\tau)$ that \textsf{C} might choose as its start vertex, to a vertex $u$ satisfying the aforementioned non-membership condition. Application of Theorem \ref{thm12} then allows us to construct a memoryless winning strategy, $\sigma_\textsf{R}$, from all states $S \in (V' - Attr(F)) \cap V_0$. The construction of $\sigma^0_\textsf{R}$ involves checking, for each of $n$ possible start vertices $v$ for \textsf{C}, at most $n$ vertices $u$ in order to identify which combination of $v$ and $u$ satisfies $(v, u, \textsf{C}, 0) \notin Attr(F)$. Hence, this initial phase can take at most $O(n^2)$ time. Similar to before, the algorithm of Theorem \ref{thm12} can take at most $O(|V'| + |E'|) = O(\textsf{LCM}(L)\cdot n^3)$ time, and hence we have that the overall construction of a strategy for \textsf{R} can take at most $O(\textsf{LCM}(L)\cdot n^3)$ time, as claimed. Finally, interpreting $\sigma_\textsf{R}$ as a strategy for $\textsf{R}$ in the game of \textsf{EPCR} on $G^\tau$ is the same as for \textsf{C}; the only difference is the way in which the start vertex is selected -- \textsf{R} waits until $\textsf{C}$ has selected a start vertex $c$, and then chooses its own start vertex as $\sigma^0_\textsf{R}(c)$.
\end{proof}

We remark that Theorems~\ref{thm2} and~\ref{thm13} can be generalised to a setting with $k$~cops at the expense of increasing the algorithm's running time to $O(\textsf{LCM}(L)\cdot k\cdot n^{k+2})$. The idea is to fix an arbitrary ordering of the cops, and create $k+1$ layers of states during every time step $t \in [\textsf{LCM}(L)]$ (one for each of the $k$ cops' moves, followed finally by the robber's move). By allowing in each time step for the players to play their moves in this serialised fashion the resulting game graph would require $O(\textsf{LCM}(L) \cdot k)$ layers with $n^{k+1}$ states in each, with at most $n$ edges leading from every state to states in the following layer.

\section{An Upper Bound on the Length Required to Ensure an Edge-Periodic Cycle is Robber-Win}
In this section, we consider a restricted subclass of the edge-periodic graphs -- in particular, we consider the subclass of edge-periodic cycles $C^\tau$. We provide an upper bound on the length required of any edge-periodic cycle~$C^\tau$, to ensure that it is robber-win. First, we show that any edge-periodic infinite path whose edge-pattern lengths originate from a set of integers $L$ with finite size is robber-win, and second show how the strategy for such infinite paths can be adapted to the cycle case. Throughout the following, we refer to cop as \textsf{C} and robber as \textsf{R}.  We always assume that we are given an edge periodic cycle $C^\tau = (V, E, \tau)$, and that the set of integers $L$ contains the lengths of every bit pattern that $\tau$ maps at least one edge $e \in E$ to. Additionally, we write \textsf{LCM} to denote the least common multiple of the numbers in the set~$L$.

We first present a lemma for infinite paths, which will also allow us to handle the case in which the cop chases the robber around the cycle in a fixed direction.

\begin{lemma}\label{lem15} Let $P$ be an infinite edge-periodic path, $L = \{l_e : e \in E(P)\}$ and assume that $|L|$ is finite. Then, starting from any time step $t$, if $\textsf{LCM} = \max  L  $ then there exists a winning strategy for \textsf{R} from any vertex with distance at least $2\cdot \textsf{LCM}$ from \textsf{C}'s start vertex, and at least $\textsf{LCM}$ otherwise.\end{lemma}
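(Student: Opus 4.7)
The plan is to exhibit an explicit greedy strategy for \textsf{R} and analyse how the distance $d(t) = R(t) - C(t)$ evolves over time. WLOG \textsf{R} starts to the right of \textsf{C} on the infinite path (the opposite case is symmetric), and by re-phasing the edge patterns we may assume the game starts at step $0$. \textsf{R}'s strategy is the natural greedy one: at every step, if \textsf{R}'s right edge is present at the current time step, \textsf{R} moves right; otherwise \textsf{R} stays put. Under this rule \textsf{R}'s position is monotonically non-decreasing, so $d(t)$ can only decrease in steps where \textsf{C} advances while \textsf{R} stays. A catch can occur only at the end of \textsf{C}'s sub-step, so it suffices to show that whenever \textsf{C}'s right edge is present at a given time step, the distance just before that step is at least $2$.

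The heart of the argument is bounding the worst-case "erosion" of $d(t)$ during the transient phase in which \textsf{C} has not yet reached \textsf{R}'s starting vertex. The adversary's best configuration labels the edges on \textsf{C}'s initial segment with the fastest possible patterns (always-present, length $1$ if such a pattern exists in $L$) and the edges along \textsf{R}'s forward trajectory with the slowest patterns of length $\max L$. During this transient \textsf{C} advances at rate~$1$ while \textsf{R} advances only at rate $1/\max L$; over the roughly $d_0$ steps until \textsf{C} reaches \textsf{R}'s starting region, $d(t)$ drops by at most $d_0(\max L - 1)/\max L$, leaving distance approximately $d_0/\max L$ when the transient ends. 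After the transient \textsf{C}'s right-edges share their pattern length with \textsf{R}'s, so any step in which \textsf{C}'s right edge is present is matched, within the same length-$\max L$ window, by a step in which \textsf{R}'s right edge is present, and the greedy strategy preserves the distance. Because \textsf{C} moves before \textsf{R} within a step, the pre-step distance must be at least $2$, giving the requirement $d_0/\max L \geq 2$, i.e.\ $d_0 \geq 2\max L$.

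The two cases of the lemma then follow immediately. If $\textsf{LCM} = \max L$ the requirement becomes $d_0 \geq 2\,\textsf{LCM}$, matching the statement. If $\textsf{LCM} > \max L$, then since $\textsf{LCM}$ is a multiple of $\max L$ strictly larger than $\max L$ we have $\textsf{LCM} \geq 2\max L$, so $d_0 \geq \textsf{LCM}$ already meets the $2\max L$ threshold and the same greedy strategy works. The main obstacle I expect is rigorously justifying that the "fast-then-slow" layout is the true worst case; handling interleaved or phase-shifted patterns of differing lengths in full generality will likely require a potential-function argument (tracking $d(t)$ together with the residual wait-time each player owes on their current edge) or a careful enumeration showing that no other configuration produces more erosion than the canonical one analysed above.
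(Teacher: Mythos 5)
Your choice of strategy for \textsf{R} (move right whenever the right edge is present, otherwise wait) is the same as the paper's, and your final threshold $d_0 \geq 2\max L$ is numerically the right one in both cases. But the analysis you give to support it has a genuine gap, and it is exactly the one you flag yourself at the end: the entire content of the lemma is the claim that no configuration of patterns and phases erodes the distance below $2$ before the players' progress ``synchronises,'' and your argument for this is an average-rate heuristic applied to one canonical ``fast-then-slow'' layout. Two specific steps would fail under scrutiny. First, the assertion that after the transient ``\textsf{C}'s right-edges share their pattern length with \textsf{R}'s'' so that every step in which \textsf{C} can advance is matched within the same length-$\max L$ window by a step in which \textsf{R} can advance is false in general: whenever \textsf{R} is alive the two players sit at different vertices, their right edges are different edges with independent pattern lengths and independent phases, and an adversary can arrange for \textsf{R} to be stalled precisely in the steps in which \textsf{C} moves. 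Second, even granting the rate estimates, a statement about long-run rates ($1$ versus $1/\max L$) does not rule out a transient dip of $d(t)$ below $2$, which is all \textsf{C} needs; the bound has to control the distance at every step, not on average.

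The paper's proof replaces the rate heuristic with an exact combinatorial decomposition that sidesteps the worst-case-configuration issue entirely. It sets $B = \textsf{LCM}$ if $\textsf{LCM} \geq 2\max L$ and $B = 2\cdot\textsf{LCM}$ otherwise (so that $B \geq 2\max L$ always), partitions time into blocks $[t+(i-1)B,\, t+iB-1]$, and defines the strip $S_i$ to be the set of edges that \textsf{C}, moving right greedily, can \emph{first} traverse during block $i$. Placing \textsf{R} at the first vertex of $S_2$, it then compares earliest-traversal times directly: writing $T^F_X(i)$ and $T^L_X(i)$ for the first step in which player $X$ can traverse the first, respectively last, edge of $S_i$, it shows $T^L_{\textsf{R}}(i) < T^F_{\textsf{C}}(i)$ and $T^F_{\textsf{R}}(i+1) < T^L_{\textsf{C}}(i)$ (the latter using $B \geq 2\max L$, since every edge recurs within $\max L$ steps), so \textsf{R} is always at least one strip ahead of \textsf{C} no matter how the patterns are interleaved or phased. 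Since each $S_i$ has at most $B$ edges, starting distance $B$ suffices, which yields the two cases of the statement. If you want to salvage your approach, the potential-function idea you mention is essentially what you would need, but you would have to prove a pointwise (not amortised) lower bound on the potential; the strip invariant is the clean way to do that.
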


\begin{proof}
First, notice that since we assume that $|L|$ is finite, so must be $\textsf{LCM}$. Let \textsf{C} pick its initial vertex $c_t \in P$. Let $\textsf{R}$'s initial vertex be denoted by~$r_t$, and assume without loss of generality that $r_t$ will be some vertex in $P$ that lies to the right of~$c_t$. As such, we will from here onward denote by $P$ the path starting at $c_t$ and extending infinitely to the right. 

Consider the set $L$ and its constituent elements. There are two cases -- either (1) there exists $x \in L$ such that $\max  L  $ is not a multiple of $x$ -- then, $\textsf{LCM} \geq 2\cdot \max  L  $, since it cannot be the case that $\textsf{LCM}= j\cdot \max  L  $ for any $j < 2$; or (2) for every $x \in L$, $\max  L   = x\cdot i$ for some integer $i \geq 1$; then, $\textsf{LCM} = \max  L  $. With this in mind, define $B=\textsf{LCM}$ if (1) holds and $B=2\cdot\textsf{LCM}$ if (2) holds. Now, let us define the \textit{strips} $S_i$ ($i \geq 1$) to be finite subpaths of $P$, such that for all edges $e \in S_i$, $e$ can first be traversed by \textsf{C} in some time step $t_e \in [t+(i-1)B, t+iB-1]$. 
Note that $B\geq 2\cdot\max  L  $ and hence each $S_i$ must contain at least two edges.
By convention, we call the leftmost and rightmost edges (vertices) of any $S_i$ its \textit{first} and \textit{last} edges (vertices), respectively.
Note also that the last vertex of $S_i$ and the first vertex of $S_{i+1}$ are one and the same, for all $i \geq 1$.

Assume from now on that \textsf{C} moves right whenever possible. It is safe to do so since, otherwise,
\textsf{C} may only be positioned at the same vertex or further left than when following this strategy. The strategy for \textsf{R} is as follows: pick $r_t$ to be the first vertex of $S_2$ and move right (i.e., away from \textsf{C}) whenever possible. 

We now demonstrate that \textsf{R}'s strategy is a winning one. Let $T^F_X(i)$ and $T^L_X(i)$ denote the first time step that player $X \in \{\textsf{C}, \textsf{R}\}$  is able to traverse the first/last edge of $S_i$, respectively. Note that $T^F_\textsf{C}(i) \geq t + (i-1)B$ and that $T^L_\textsf{R}(i) \leq t + (i-1)B -1$. Combining the two gives that $T^L_\textsf{R}(i) < T^F_\textsf{C}(i)$, which implies that $\textsf{C}$ can never catch $\textsf{R}$ in any step in which the edge leading to both player's right belongs to $S_i$.

We next show that \textsf{R} cannot be caught when the edge leading to \textsf{C}'s right belongs to $S_i$ and the edge leading to \textsf{R}'s right belongs to $S_{i+1}$. Let $M = \max  L  $ and 
recall that $T_\textsf{C}^F(i) \geq t + (i-1)B$. 
Since the strips $S_i$ are defined to consist of all edges crossed in the period $[t + (i-1)B, t + iB-1]$, and since $B \geq 2M$, it follows that at time $T_\textsf{R}^F(i+1) \leq t + (i-1)B + M - 1$, there is at least one more edge of $S_i$ that remains to be crossed by \textsf{C}. This gives that $T^L_\textsf{C}(i) > T^F_\textsf{R}(i+1)$ and yields the claim. Combining this with the earlier observation that $T^L_\textsf{R}(i) < T^F_\textsf{C}(i)$, it follows that there exists a strategy for \textsf{R} starting from the first vertex of $S_2$.

Finally, recall that when $\textsf{LCM} = \max  L  $, we have that each $S_i$ consists of at most $2\cdot \textsf{LCM}$ edges, otherwise it consists of at most $\textsf{LCM}$. Hence, there exists a winning strategy for \textsf{R} starting from some vertex $r_t \in P$ with distance at most $2\cdot \textsf{LCM}$ or \textsf{LCM} from $c_t$, depending on the condition satisfied by \textsf{LCM}. The lemma follows by noticing that the above strategy also works when $\textsf{R}$ is initially positioned at any vertex further to the right than the first vertex of $S_2$.
\end{proof}

\begin{theorem}\label{thm16}
	Let $C^\tau = (V, E, \tau)$ be an edge-periodic cycle on $n$ vertices and $L = \{l_e : e \in E\}$. Then, if $n \geq 2\cdot l \cdot \textsf{LCM}(L)$, $C^\tau$ is robber-win (where $l = 1$ if $\textsf{LCM}(L) \geq 2\cdot \max  L  $, and $l=2$ otherwise).
\end{theorem}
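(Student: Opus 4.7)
The plan is to reduce Theorem~\ref{thm16} to Lemma~\ref{lem15} by exploiting the two-arc structure of a cycle: whenever $n \geq 2 l \cdot \textsf{LCM}(L)$, R can be positioned at a distance of at least $l \cdot \textsf{LCM}(L)$ from C along both arcs simultaneously and then adaptively flee along whichever arc C happens to be attacking.

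Concretely, after C selects its initial vertex $c_0$, R selects $r_0$ so that both arc-distances between $c_0$ and $r_0$ are at least $l \cdot \textsf{LCM}(L)$; this is possible because the two arc lengths sum to $n \geq 2 l \cdot \textsf{LCM}(L)$ (one may, for instance, place $r_0$ at clockwise distance exactly $l \cdot \textsf{LCM}(L)$). Writing $d_+$ and $d_-$ for the clockwise and counterclockwise arc-distances from C to R, we therefore start with $d_+(0), d_-(0) \geq l \cdot \textsf{LCM}(L)$. Each arc from C through R can be viewed as the prefix of an infinite edge-periodic path, so Lemma~\ref{lem15} supplies fleeing strategies $\sigma_+$ and $\sigma_-$ along the two arcs, each winning as long as R's distance from C on the relevant arc is at least $l \cdot \textsf{LCM}(L)$ at the moment it is invoked.

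For R's cycle-strategy I would take a direction-adaptive combination: in every step R mirrors the direction of C's most recent move, playing $\sigma_+$ when C moved clockwise, $\sigma_-$ when C moved counterclockwise, and remaining at its current vertex when C stayed put. Since exactly one of $\sigma_\pm$ is invoked per step, R is never asked to perform conflicting moves. The proof then reduces to verifying the invariant that whenever R invokes $\sigma_+$ (respectively $\sigma_-$), the clockwise (respectively counterclockwise) arc-distance still satisfies the $l \cdot \textsf{LCM}(L)$ lower bound required by Lemma~\ref{lem15}; once this is established, Lemma~\ref{lem15} guarantees that R is never caught from the attacking side.

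The main obstacle will be verifying this invariant at the moments C reverses direction, which is where one must guarantee that the newly-attacked arc has not been ``eaten up'' during the preceding uni-directional chase phase. Here the crucial tool is the strip analysis from inside the proof of Lemma~\ref{lem15}: during any uni-directional chase, R's position on the fleeing arc stays at most one strip ahead of C, and since each strip contains at most $B = l \cdot \textsf{LCM}(L)$ edges, the opposite arc shrinks by at most this amount over the phase. Combined with the initial slack $d_\pm(0) \geq l \cdot \textsf{LCM}(L)$ and the hypothesis $n \geq 2 l \cdot \textsf{LCM}(L)$, this keeps the non-pressured arc long enough for Lemma~\ref{lem15} to kick in on the reversed direction. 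Iterating this argument across all of C's direction changes yields an infinite evasion strategy for R and hence shows $C^\tau$ is robber-win.
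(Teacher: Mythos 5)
Your reduction to Lemma~\ref{lem15} is the right instinct and matches the paper's overall plan, but the specific rule you use to combine the two arc-strategies does not work, and the invariant you defer to the end is exactly where the argument breaks. The rule ``\textsf{R} mirrors the direction of \textsf{C}'s most recent move, and stays put when \textsf{C} stays put'' makes \textsf{R} too passive to inherit the guarantee of Lemma~\ref{lem15}: that lemma's strategy, and its strip-timing analysis (which needs $T^L_{\textsf{R}}(i) \leq t+(i-1)B-1$), require \textsf{R} to move away \emph{whenever the escape edge is present}, independently of what \textsf{C} does. If \textsf{R} only attempts to flee in steps in which \textsf{C} advanced, \textsf{C} can synchronise its advances with the steps in which \textsf{R}'s escape edge is absent. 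Concretely, take $n=8$ with every edge given pattern $1$ except one edge $e^*$ with pattern $10$, so $L=\{1,2\}$, $l=2$ and $n=2l\cdot\textsf{LCM}(L)$; let \textsf{C} start at the vertex antipodal to the tail of $e^*$, so that \textsf{R}'s (essentially forced) start vertex has $e^*$ as its clockwise escape edge, and let \textsf{C} advance clockwise only in odd time steps, when $e^*$ is absent. Under your rule \textsf{R} never moves and is caught within $n$ steps, although this cycle is robber-win. If you instead read your strategy as ``follow $\sigma_+$ unconditionally until \textsf{C} changes direction,'' the opposite failure appears: \textsf{C} can idle while \textsf{R} keeps running, so the non-pressured arc can shrink below $l\cdot\textsf{LCM}(L)$ before \textsf{C} reverses. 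Your supporting claim that \textsf{R} ``stays at most one strip ahead of \textsf{C}'' is not something Lemma~\ref{lem15} provides -- the strip analysis bounds how far \emph{behind} \textsf{R} can fall, not how far ahead it can run -- so the reversal-time invariant is unproven in that reading as well.

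The paper closes this gap with an idea your sketch is missing: an explicit re-centering mechanism. \textsf{R} starts antipodal to \textsf{C} and, in a default \emph{Hide} mode, mirrors \textsf{C} so as to \emph{remain antipodal}; this keeps both arc-distances near $n/2 \geq l\cdot\textsf{LCM}(L)$ and is safe because antipodal vertices are at distance at least $2$. Only when the mirroring edge is missing does \textsf{R} enter an \emph{Escape} period, during which it runs the Lemma~\ref{lem15} strategy unconditionally in the threatened direction; the Escape period ends, by definition, as soon as antipodality is restored, and while it lasts the unthreatened arc stays longer than $n/2$, so \textsf{R} cannot be caught from behind. It is this Hide/Escape alternation that guarantees every fresh invocation of Lemma~\ref{lem15} begins from distance at least $l\cdot\textsf{LCM}(L)$. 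To repair your proof you would need to replace ``mirror \textsf{C}'s direction'' with such a re-centering rule (or establish a genuinely new invariant at direction changes); as written, the central step is not proved.
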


\begin{proof}
		We let $c_t$ and $r_t$ denote the vertex at which \textsf{C} and \textsf{R} are positioned at the start of time step $t$, respectively. Consider now some edge $e \in E(C^\tau)$ and classify its vertices as a `left' and `right' vertex arbitrarily; let  the left vertex of each edge be the right vertex of the following edge in the cycle. We proceed by specifying a strategy for \textsf{R}. Initially, let \textsf{C} choose $c_0$; \textsf{R} should choose $r_0$ to be the vertex antipodal to $c_0$ in $C^\tau$. (If $n$ is odd then \textsf{R} should select $r_0$ to be either of the two vertices that are furthest away from $c_0$; we will refer to both these vertices as antipodal to $c_0$, and treat vertices in all steps $t \geq 0$ in the same way.) We now distinguish between two modes of play, \textit{Hide} and \textit{Escape}, and specify \textsf{R}'s strategy in each of them.
		
		\textbf{Hide mode}: A \textit{Hide period} begins in step~$0$ and in any step $t \geq 2$ such that $c_t$ and $r_t$ are antipodal, but $c_{t-1}$ and $r_{t-1}$ were not. As such, any game in which \textsf{R} follows our strategy begins in a Hide period. The Hide period beginning at step $t$ consists of the steps $t' \in [t, t+x]$ such that $c_{t'}$ and $r_{t'}$ are antipodal, but $c_{t+x+1}$ and $r_{t+x+1}$ are not. Any Hide period is followed directly by an escape period, which will start in step $t+x+1$.
		
		\textsf{R}'s \textbf{Hide strategy}: If the game is in a Hide period during step $t$, \textsf{R} should observe \textsf{C}'s choice of $c_{t+1}$, and always try to move to a vertex antipodal to it. We claim that \textsf{R} cannot be caught in any step belonging to a Hide period. To see this, observe that regardless of whether $\textsf{LCM} = \max  L  $ or $\textsf{LCM}\geq 2\cdot \max  L  $, we have that $n \geq 4\cdot \max  L   \geq 4$. As a result, antipodal vertices in $C^\tau$ are at least distance 2 apart from one another, and the claim follows.
		
		\textbf{Escape mode}: An \textit{Escape period} always begins in a step $t$ such that step $t-1$ was the last step of some Hide period. As such, an Escape period consists of steps $t' \in [t, t+x]$, such that each $c_{t'}$ and $r_{t'}$ are not antipodal, but $c_{t+x+1}$ and $r_{t+x+1}$ are. The last step of the Escape period is then $t+x$, and the first step of the next Hide period is $t+x+1$.
		
		\textsf{R}'s \textbf{Escape strategy}: Assume that some Escape period starts in step $t$. Then, at the start of step $t-1$, $c_{t-1}$ and $r_{t-1}$ were antipodal to one another, and during step $t-1$, we had a situation in which \textsf{C} was able to move towards \textsf{R} in some direction, but the edge incident to $r_{t-1}$ leading in the same direction was not present. Now, recall that if $l = 2$, so that $\textsf{LCM} = \max{L}$, then $n \geq 4 \cdot \textsf{LCM}$; and if $l = 1$ so that $\textsf{LCM} \geq 2 \cdot \max  L  $, then $n \geq 2 \cdot \textsf{LCM}$. Therefore, since $c_{t-1}$ and $r_{t-1}$ are antipodal in $C^\tau$ , if $l = 2$ holds we have that the distance between them is at least $2 \cdot \textsf{LCM}$ and if $l = 1$ holds, the distance between them is at least \textsf{LCM}. Observe now that we are able to view any edge-periodic cycle of finite length as an infinite path whose edge patterns repeat infinitely often. Combining these two facts, it then follows from Lemma \ref{lem15} that when the Escape period starts in step $t$, there exists a strategy for \textsf{R} (which started in the previous step from vertex $r_{t-1}$) which will enable it to stay alive until the Hide period ends. 
		
Finally, since every step $t$ belongs to either a Hide period or an Escape period, we have shown that \textsf{C} can never catch \textsf{R}, and the proof is complete.
\end{proof}

We now give lower bounds on the length required of a strictly edge-periodic cycle to ensure that it is robber-win.

\begin{theorem}\label{thm17}
	There exists an edge-periodic cycle of length $3\cdot \textsf{LCM}$ with edge pattern lengths in the set $L$ that is both cop-win and satisfies $\textsf{LCM} = \max  L  $.
\end{theorem}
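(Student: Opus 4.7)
The plan is to exhibit a concrete cop-win edge-periodic $6$-cycle in order to establish the theorem. I would take $C^\tau$ on vertices $\{0, 1, \ldots, 5\}$ with edges $e_i = \{i, (i+1) \bmod 6\}$ for $i \in [6]$, assigning the length-$2$ pattern $10$ to the two edges $e_2 = \{2, 3\}$ and $e_3 = \{3, 4\}$ that are incident to vertex $3$, and the always-present length-$2$ pattern $11$ to the remaining four edges. Then $L = \{2\}$, so $\textsf{LCM}(L) = 2 = \max L$, and the cycle has length $6 = 3\cdot\textsf{LCM}(L)$, satisfying every constraint in the statement. The decisive structural fact is that at every even time step the graph is the full $6$-cycle, while at every odd time step it decomposes into the path $2\text{-}1\text{-}0\text{-}5\text{-}4$ together with an isolated ``trap'' vertex $3$.

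For the cop strategy I would have the cop start at vertex $0$ (antipodal to the trap vertex $3$) and employ pursuit in the currently active subgraph: at every turn the cop moves to a neighbour minimising its distance to the robber in that subgraph, and in the particular situation in which the robber sits at the isolated vertex $3$ at an odd step, the cop instead moves to a cycle-neighbour of the robber (vertex $2$ or $4$) so as to enable a catch at the following even step. The intuition is that any robber playing the antipodal-evasion strategy of Lemma~\ref{lem15} against this pursuit will eventually be driven onto vertex $3$ at the end of an even step; at the next odd step the trap then freezes the robber and the catch follows within two more turns via $e_2$ or $e_3$.

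I would verify the strategy by case analysis over the robber's starting vertex $r_0$. The cases $r_0 = 0$ (caught at the start), $r_0 \in \{1, 5\}$ (immediate catch via $e_0$ or $e_5$), and $r_0 \in \{2, 4\}$ (catch in at most three moves by pushing the robber onto vertex $3$) are all straightforward. The hard case is $r_0 = 3$, which I would handle by tracing the cop's chase $0 \to 1 \to 0 \to 5 \to 4 \to 3$ against the robber's antipodal response $3 \to 4 \to 4 \to 3$, and verifying capture at time step~$4$ via edge $e_3$.

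The main obstacle is ruling out robber deviations in the hard case $r_0 = 3$: the robber may attempt to reverse direction on the odd-step path, to remain stationary, or to exploit tie-breaking in the cop's pursuit. I would handle this by a finite enumeration of robber responses at each reachable state, arguing in every branch that the robber's alternative either brings it adjacent to the cop via an always-present edge (yielding immediate catch) or sends it back onto vertex $3$ at the end of an even step (triggering the trap-catch sequence). The resulting argument is essentially a finite-state verification; the bookkeeping of the case analysis is the main workload, though no individual subcase requires more than a one-step forcing argument.
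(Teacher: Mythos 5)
Your construction is correct, and I checked the game tree: with the cop starting at vertex $0$, every robber start leads to capture (vertices $1,5$ immediately; $2,4$ in three cop moves via the freeze at vertex $3$; and $3$ itself by step $4$ along the trace you give, with all robber deviations either walking into an always-present edge adjacent to the cop or re-entering the trap). The underlying idea is the same as the paper's: two consecutive edges with a sparse pattern form a trap on an otherwise static cycle, and the robber cannot circulate because passing the trap costs it a forced wait while the cop keeps moving. The differences are these. First, the paper's proof is parametric: for every integer $M>1$ it builds a $3M$-cycle with $L=\{1,M\}$, two consecutive edges carrying the pattern $0\cdots01$ of length $M$, and the cop starting at distance $M-1$ and $2M-1$ from the two ends of the $M$-path; the argument is then a sweep ($\mathsf{C}$ clears the short arc, then the long arc, exploiting that the $M$-path is impassable before step $M-1$) rather than an exhaustive state enumeration. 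Second, your proof yields only the single instance $\textsf{LCM}=2$. The theorem as literally stated is existential, so one witness formally suffices, but as a lower bound complementing the $4\cdot\textsf{LCM}$ upper bound of Theorem~\ref{thm16} the result is only meaningful if such cycles exist for arbitrarily large $\textsf{LCM}$; your argument does not give that, whereas the paper's family does. If you want your route to match the paper's strength, you would need to generalise your $6$-cycle to a family (which essentially forces you back to the paper's parametric construction and a sweep-style argument, since finite enumeration does not scale with $M$). Within its scope, though, your finite-state verification is a perfectly sound and arguably more easily checkable proof of the literal statement.
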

\begin{proof}
	Let $M > 1$ be an integer and consider an edge-periodic cycle $C$ of length $3M$ with edge pattern lengths in $L = \{1, M\}$. Let $2$ consecutive edges have patterns $0...01$ of length $M$, and all $3M - 2$ remaining edges have pattern $1$. We refer to the subpath of $C$ consisting of the two edges with period $M$ as the $M$-path, and the subpath with edges labelled with $1$ as the $1$-path. 
	
	We now specify a strategy for \textsf{C} and show that it is in fact a winning strategy: Let \textsf{C} position itself initially at either of the two vertices belonging to the $1$-path that are distance $M-1$ from one extreme point of the $M$-path, and $2M-1$ from the opposite extreme point (where distance is taken to mean the length of the path to that extreme point that avoids the edges of the $M$-path). Call that chosen vertex $c_0$, and notice that it splits the $1$-path into two subpaths that intersect only in $c_0$ -- one of length $M-1$ which we will call $P^-$, the other of length $2M-1$ that we call $P^+$. If \textsf{R} chooses its initial position to be some vertex lying on $P^-$, then \textsf{C} can move along all edges of $P^-$ in the first $M-1$ steps. Since the only way for \textsf{R} to leave $P^-$ without running into \textsf{C} is via the $M$-path, \textsf{C} will catch \textsf{R} in these $M-1$ steps, since no edge of the $M$-path is present until step $M-1$. If \textsf{R} chooses its initial position as some vertex lying on $P^+$, then in the first $2M-1$ steps \textsf{C} can traverse all edges of $P^+$. Again, the only way  for \textsf{R} to leave $P^+$ without encountering \textsf{C} is via the $M$-path. This time \textsf{R} will be able to traverse one edge of the $M$-path, but will be stuck at the middle vertex of the $M$-path until time step $2M-1$. Since, in this step (which is the $2M$-th step), \textsf{C} will be positioned at the vertex that lies on both the $M$-path and $P^+$, \textsf{C} will move first and catch \textsf{R}. It remains to be shown that \textsf{R} will be caught if it chooses the middle vertex of the $M$-path as its start vertex: here, \textsf{R} will not be able to move until step $M-1$, so \textsf{C} should traverse all edges of $P^-$ in the first $M-1$ steps. Then, in step $M-1$, \textsf{C} will be at one endpoint of the $M$-path, with \textsf{R} at the middle vertex of the $M$-path -- in this step, \textsf{C} will move first and catch \textsf{R}. Since we have shown that \textsf{C} wins in all cases, and since $M = \textsf{LCM}(\{1, M\})$, the theorem follows.
\end{proof}

A small amount of modification to the construction in the proof of Theorem \ref{thm17} yields an  additional lower bound for the $\textsf{LCM} \geq 2\cdot \max  L  $ case:

\begin{theorem}\label{thm18}
	There exists an edge-periodic cycle of length $1.5 \cdot \textsf{LCM}$ with edge periods in the set $L$ that is both cop-win and satisfies $\textsf{LCM} \geq 2\cdot \max  L  $.
\end{theorem}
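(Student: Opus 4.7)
The plan is to reuse the construction from the proof of Theorem~\ref{thm17} essentially verbatim, changing only the labels on the always-present edges so as to inflate $\textsf{LCM}(L)$ while keeping the cop's winning strategy (and hence the cop-win conclusion) intact.

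Concretely, I would fix any odd integer $M \geq 3$ and take $C$ to be the cycle on $3M$ vertices $v_0, v_1, \ldots, v_{3M-1}$ used in the proof of Theorem~\ref{thm17}: the two edges $\{v_0,v_1\}$ and $\{v_1,v_2\}$ (the ``$M$-path'') each carry the pattern $00\ldots 01$ of length $M$. The remaining $3M-2$ edges, however, are no longer labelled with the pattern $1$ of length $1$; each is instead assigned the pattern $11$ of length $2$. Then $L = \{2, M\}$, and because $M$ is odd, $\textsf{LCM}(L) = 2M$ while $\max L = M$. Hence $\textsf{LCM} = 2\cdot \max L \geq 2\cdot \max L$, and the cycle has exactly $3M = \tfrac{3}{2}\cdot 2M = 1.5 \cdot \textsf{LCM}$ vertices, matching the claimed length.

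It then remains only to argue that $C$ is cop-win. The key (and essentially only) observation is that an edge with the pattern $11$ of length $2$ is present in every time step, exactly as an edge with pattern $1$ of length $1$ would be. Consequently, for every $t \geq 0$, the set of edges present in $C$ at step $t$ coincides with the set of edges that would be present at step $t$ in the Theorem~\ref{thm17} construction using the same $M$. Since the rules of \textsf{EPCR} depend only on which edges are present in each step, the cop strategy from Theorem~\ref{thm17}, positioning \textsf{C} on the former $1$-path at distance $M-1$ from one extreme of the $M$-path and $2M-1$ from the other, and then pursuing the robber along the appropriate one of $P^-$ or $P^+$, remains a winning strategy here, and the theorem follows.

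I do not anticipate a real obstacle; the only checks required are arithmetic (that $\textsf{LCM}(\{2,M\}) = 2M$ when $M$ is odd, and that $3M = 1.5 \cdot 2M$) together with the near-trivial observation that relabelling ``always-present of period $1$'' as ``always-present of period $2$'' leaves the game unchanged. The modest subtlety is simply to notice that inflating the period of the never-absent edges from $1$ to $2$, while operationally invisible to both players, is exactly what boosts $\textsf{LCM}(L)$ from $M$ into the regime $\textsf{LCM} \geq 2\cdot \max L$ required by the statement.
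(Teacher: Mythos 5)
Your proposal is correct, and it reaches the stated bound by the same high-level route as the paper --- reuse the Theorem~\ref{thm17} cycle and its cop strategy, and introduce a length-$2$ pattern so that, for odd $M$, $\textsf{LCM}(L)$ jumps from $M$ to $2M$ while $\max L$ stays at $M$ --- but the implementation differs in a way worth noting. The paper relabels a \emph{single} always-present edge with the genuinely intermittent pattern $01$ (choosing, among the two edges the cop may cross in its second step, one that the cop's strategy only ever traverses when that pattern shows a $1$), and must then verify that the Theorem~\ref{thm17} strategy still goes through on the modified cycle. You instead relabel \emph{all} the former period-$1$ edges with the pattern $11$ of length $2$, which is operationally invisible to both players, so the game is literally unchanged and nothing needs re-verifying; the arithmetic $\textsf{LCM}(\{2,M\})=2M=2\cdot\max L$ and $3M = 1.5\cdot\textsf{LCM}$ then finishes the proof. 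Your version is simpler and is valid under Definition~\ref{def1}, which places no minimality requirement on the pattern lengths $l_e$; the only caveat is that it leans on exactly that permissiveness --- the pattern $11$ has ``true'' period $1$, so a reader who insisted that $l_e$ be the minimal period of edge $e$ would reject your construction but accept the paper's, whose pattern $01$ genuinely has period $2$. If you want the construction to be robust to that stricter reading, the paper's single-edge $01$ modification is the safer choice.
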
 
\begin{proof}
Perform the construction from the proof of Theorem \ref{thm17}, taking $M > 1$ to be odd. Select one of the vertices that has distance $M-1$ and $2M-1$ from opposite ends of the $M$-path, calling that vertex $x$. Consider the strategy from the proof of Theorem \ref{thm17}, and notice that there are two edges that \textsf{C} may cross in the second step. Select either one of these edges and replace its current pattern of $1$ with a pattern of the form $01$ (with length $2$). \textsf{C} should now select its initial vertex as $x$ and follow the strategy in the proof of Theorem \ref{thm17} -- this works since the edge with pattern $01$ has been selected so that it is present whenever \textsf{C}'s strategy crosses that edge. Now, notice that $M$ is odd, and so we have that $\textsf{LCM} = 2M$. Since the constructed cycle has length $3M$, the theorem follows.
\end{proof}

\section{Conclusion}
We considered the problem of deciding whether a single cop can catch a single robber on a given graph, but extended the class of viable game arenas to include the edge-periodic graphs. For a given $n$-vertex edge-periodic graph whose edges are labelled with patterns of lengths in the set $L$, we showed, amongst other things, that there exists an algorithm with running time $O(\textsf{LCM}(L) \cdot n^3)$ that decides whether the cop or robber wins, and computes a strategy for the winning player.
One natural open question that we find particularly interesting is the following: what is the complexity of deciding whether cop or robber wins when the least common multiple of the label lengths, $\textsf{LCM}(L)$, is not bounded by a polynomial? We note that $\textsf{LCM}(\{1,...,n\}) = \textsf{e}^{\phi(n)}$, where $\phi(n) \in \Theta(n)$ is Chebyshev's function \cite{Rankin61}. This shows that, given a graph with at least $n$ edges and at least one edge labelled by each of the integers $i \in \{1,2,\ldots,n\}$, the transformation function described in Section 3.1 would produce a directed graph of order $\textsf{e}^{\Omega(n)}$, i.e., exponential in $n$, and as such the algorithm of Theorem \ref{thm2} would have exponential running time. It would be interesting to establish whether there exists a better algorithm, or whether the problem is $\mathit{NP}$-hard for this case. More generally, one could also examine the cops and robbers problem within the context of other temporal graph models.

In the second part of the paper, we obtained a $2\cdot l \cdot \textsf{LCM}(L)$ upper bound on the length required of any edge-periodic cycle to ensure that it is robber-win, where $l = 1$ if $\textsf{LCM}(L) \geq 2\max  L  $, and $l=2$ otherwise. We also provided lower bound constructions of cop-win cycles that have lengths $1.5\cdot \textsf{LCM}(L)$ and $3\cdot \textsf{LCM}(L)$ in the $l=1$ and $l=2$ cases, respectively. As an example of a potentially interesting further direction, one could attempt to tighten the gap between upper and lower bounds on the minimum length of edge-periodic cycles to be guaranteed to be robber-win in both of these cases.

\section*{Acknowledgements}
The authors would like to thank an anonymous reviewer for a suggestion
leading to the running-time for the variant with $k$ cops mentioned at
the end of Section~\ref{sec:main}.

\bibliographystyle{plain}
\bibliography{cops_robbers_periodic}
\end{document}